\newcommand{\remove}[1]{}
\newtheorem{theorem}{Theorem}[section]
\newtheorem{definition}[theorem]{Definition}
\newtheorem{lemma}[theorem]{Lemma}
\newtheorem{proposition}[theorem]{Proposition}
\newtheorem{corollary}[theorem]{Corollary}
\newtheorem{claim}[theorem]{Claim}
\newtheorem{remk}[theorem]{Remark}
\newenvironment{remark}{\begin{remk} \begin{normalfont}}{\end{normalfont}
\end{remk}}
\newtheorem{transformation}[theorem]{Transformation}
\def\FullBox{\hbox{\vrule width 8pt height 8pt depth 0pt}}
\def\qed{\ifmmode\qquad\FullBox\else{\unskip\nobreak\hfil
\penalty50\hskip1em\null\nobreak\hfil\FullBox
\parfillskip=0pt\finalhyphendemerits=0\endgraf}\fi}
\def\qedsketch{\ifmmode\Box\else{\unskip\nobreak\hfil
\penalty50\hskip1em\null\nobreak\hfil$\Box$
\parfillskip=0pt\finalhyphendemerits=0\endgraf}\fi}
\newenvironment{proof}{\begin{trivlist} \item {\bf Proof:~~}}
  {\qed\end{trivlist}}
\newcommand{\defword}[1]{{\bf #1}}
\newcommand{\BatchSize}{{b}}
\newcommand{\Field}{\mathbb{F}}
\newcommand{\fs}{q}
\newcommand{\Block}{{n}}
\newcommand{\numz}{z}
\newcommand{\Cp}{C}
\newcommand{\Rate}{R}
\newcommand{\Aomn}{\cal A_{\hbox{omn}}}
\newcommand{\Asc}{\cal A_{\hbox{sc}}}
\newcommand{\Ass}{\cal A_{\hbox{rs}}}
\newcommand{\Abfo}{\cal A_{\hbox{co}}}
\newcommand{\Apk}{\cal A_{\hbox{pk}}}
\newcommand{\dummy}{{d}}
\newcommand{\hash}{{H}}
\newcommand{\parity}{{P}}
\newcommand{\rndm}{r}
\newcommand{\Dreal}{\cal D}
\newcommand{\Dsim}{\cal D'}
\newcommand{\Ebad}{\cal E_{\hbox{bad}}}
\newcommand{\errorprob}{e}
\begin{document}

\title{Adversarial Models and Resilient Schemes for Network Coding}

\author{Leah Nutman\thanks{ Computer Science Division,
The Open University of Israel, Raanana, 43107, Israel. {\tt
lnutman@gmail.com}} \and Michael Langberg\thanks{Computer Science
Division, The Open University of Israel, Raanana, 43107, Israel.
{\tt mikel@openu.ac.il}}}

\maketitle

\begin{abstract}
In a recent paper, Jaggi et al. \cite{JaggiLKHKM07}, presented a
distributed polynomial-time rate-optimal network-coding scheme
that works in the presence of Byzantine faults. We revisit their
adversarial models and augment them with three, arguably
realistic, models. In each of the models, we present a distributed
scheme that demonstrates the usefulness of the model. In
particular, all of the schemes obtain optimal rate $C-z$, where
$C$ is the network capacity and $z$ is a bound on the number of
links controlled by the adversary.
\end{abstract}

\section{Introduction} \label{intro:sec}
Network coding is a powerful paradigm for network communication. In
``traditional" networks, internal nodes simply transmit packets
that arrive to them (without any substantial change of their
content). In contrast, when performing network coding, internal
nodes of the network are allowed to mix the information from
different packets they receive before transmitting on outgoing
edges. This mixing may substantially improve the throughput of a
network, it can be done in a distributed manner with low
complexity, and is robust to packet losses and network failures,
e.g., \cite{ACLY00,LYC03,KM03, JagSanCEEJT, HoMKKESL06}.

The focus of this paper is network coding for multicast networks
(where a single sender wants to transmit the {\em same}
information to several receivers), at the presence of Byzantine
network faults. A Byzantine adversary that may maliciously
introduce erroneous messages into a network may be especially
disruptive when network coding is applied. The simple reason is
that any message (including the faulty ones) affect all messages
on its path to the recipient. Therefore, a single faulty message
may contaminate many more messages down the line.

Motivated by the above difficulty, there has been some work on
detecting and correcting Byzantine faults. We distinguish between
computationally unbounded and computationally bounded adversaries.
For computationally unbounded Byzantine adversaries, error
detection was first addressed in \cite{HoLKMEK:04}. This was
followed by the work of Cai and Yeung~\cite{YeuC:06a,CaiY:06a},
who generalize standard bounds on error-correcting codes to
networks, without providing any explicit algorithms for achieving
these bounds.  Jaggi et al.~\cite{JagLHE:05}, consider an
information-theoretically rate-optimal solution to Byzantine
attacks, which however requires a centralized design. Finally, a
distributed polynomial-time rate-optimal network-coding scheme was
recently obtained (independently) by Jaggi et al.
\cite{JaggiLKHKM07} and Koetter and Kschischang \cite{Koe:07}.
Error detection for multicast network coding in the presence of
computationally bounded Byzantine adversaries was also considered
in the past \cite{KroFM:04, GkaR:06, ChaJL:06}. In these works
various authentication schemes are performed at internal nodes of
the network. In \cite{KroFM:04, GkaR:06} a centralized trusted
authority is assumed to provide hashes of the original packets to
each node in the network, \cite{ChaJL:06} obviates the need for a
trusted entity under the assumption that the majority of packets
received at terminal nodes is uncorrupted.

This paper builds on the scheme of \cite{JaggiLKHKM07} to obtain distributed polynomial-time rate-optimal network-coding schemes in three realistic adversarial models. Our schemes, as well as those of \cite{JaggiLKHKM07} assume no knowledge of the topology of the network and follow the distributed network coding protocol of \cite{HoMKKESL06}. Namely, their implementation involves only a slight
modification of the source and destination while the internal
nodes can continue to use the standard protocol
of~\cite{HoMKKESL06}. Before we mention our contribution in
detail, we present a brief description of the adversarial models
studied in \cite{JaggiLKHKM07}. In the following informal summary,
a sender named `Alice' is interested in the transmission of
information to a group of receivers named `Bob' over a given
network. The Byzantine adversary, `Calvin', controls some of the
links of the network and injects erroneous messages into the
network in aim to corrupt the communication between Alice and Bob.

\noindent
{\bf Omniscient adversary model:} In this model Calvin is all-powerful and
all-knowing, and is limited only by the number of links $z$ under
his control. \cite{JaggiLKHKM07} obtained a network coding scheme
with optimal rate for this model of $C-2z$, where $C$ is the
network capacity.

\noindent
{\bf Secret channel model:} This model allows Alice to send to Bob a
short (low rate) secret, which is completely hidden from Calvin
(who is again all-powerful and all-knowing (excluding the secret), and is
limited by the number of links $z$ under his control).
\cite{JaggiLKHKM07} obtained a network coding scheme with optimal
rate of $C-z$ for this model. Notice that the rate achievable in
this model is strictly higher than that in the Omniscient model.
This secret channel model was originally referred to in
\cite{JaggiLKHKM07}  as the `shared secret model'. We rename it
here to secret channel model as the secret shared in this model
between Alice and Bob {\em may depend on Alice's message}. We
elaborate on this point in detail shortly.

\noindent
{\bf Limited eavesdropping model:}  The last model, which is the
least relevant to our work, limits the number of links on which
Calvin can eavesdrop (it was originally named ``limited adversary
model" and we rename it here for concreteness). In this model
\cite{JaggiLKHKM07,JaggiLan07} obtained a network coding scheme
with rate $C-z$, {\em as long as Calvin can eavesdrop on at most
$C-z$ links} (in addition to the $z$ links under his complete
control).

\subsection*{Our Contribution}

In this work we introduce three additional adversarial models, and
give optimal rate efficient distributed network-coding schemes in
each of the models. As mentioned above, our schemes (as well as
those of \cite{JaggiLKHKM07}) assume no knowledge of the topology
of the network and follow the distributed network coding protocol
of \cite{HoMKKESL06}. Roughly speaking, we obtain an optimal rate of $C-z$
on all the adversarial models described below (the optimality of our schemes follow, e.g., from \cite{JagLHE:05}).

\paragraph{Random-Secret Model:} The first model we present is
the random secret model in which Alice and Bob share a short
(uniformly distributed) random secret which is completely hidden from Calvin. Calvin is
all-powerful and all-knowing (excluding the secret), and is
limited by the number of links $z$ under his control. This model
differs from the `secret channel' model discussed  in
\cite{JaggiLKHKM07} in the sense that the secret that is shared by
Alice and Bob is not constructed as a function of  Alice's
message, but rather is uniformly distributed and independent of Alice's message.
The independence of the secret shared by Alice and Bob from the
actual message $M$ being transmitted by Alice has several
advantages. This allows Alice and Bob to share their secret {\em
prior} to the act of communicating $M$. For example, one may
consider the scenario where Alice and Bob are able to meet (or
communicate) in advance and share a large source of completely
random bits (such as a CD of uniformly generated bits). As long as
these bits are unknown to Calvin, they can be used overtime to
communicate at high rate over the network (without the need of an
additional low rate channel connecting Alice and Bob). Moreover,
as we will see shortly, communication in this model sets the
foundations for communicating at high rate in the setting in which
Calvin is computationally bounded (more specifically, in the {\em
symmetric key} cryptographic setting). We would like to note that
in the scheme of \cite{JaggiLKHKM07} for their `secret channel'
model, the secret information that Alice and Bob share indeed
strongly depends on the message $M$ Alice transmits to Bob and
hence cannot extend naturally to the examples mentioned above.

For the random secret model we obtain a network coding scheme with
optimal rate of $C-z$. Our scheme is obtained by a transformation
of the scheme of \cite{JaggiLKHKM07} for the secret channel model.
In our proof we do not need to get into the finer details of the
original scheme and instead observe and exploit a useful property
of the original secret composition.

\paragraph{Causal-Omniscient Model:}
As in the Omniscient model of \cite{JaggiLKHKM07}, in this model
we assume Calvin is all-powerful and all-knowing, and is again
limited by the number of links $z$ under his control. However, to
obtain rate greater than $C-2z$, we slightly restrict Calvin.
Namely, we assume that Calvin is causal. Specifically, when Calvin
injects messages into the network at time step $t$, he only has
access to messages sent by Alice at time steps at most $t+\Delta$.
Here, $\Delta$ is some parameter of the network which is
considered small compared with the length of the communication
stream.

We present an optimal rate distributed network coding scheme for
this model. Just as in the omniscient adversary model, our scheme
requires $C>2z$. However, in such a case we obtain a rate of
$C-z$ (compared with $C-2z$ in the omniscient adversary model).
Our scheme is obtained by a fully modular composition of two
network coding schemes from \cite{JaggiLKHKM07}: one for the omniscient adversary model and
the other for the secret channel model.
The study of the causal-omniscient model will set the foundations for communicating
at high rate in yet an additional setting in which Calvin is computationally
bounded, the {\em public key} cryptographic setting.

We would like to note that causal adversaries were also implicitly studied in
\cite{JagLHE:05} in the centralized setting, while in this work we
focus on the distributed setting. Nevertheless, the upper bounds
proven in \cite{JagLHE:05} imply that the requirement of $C>2z$ is
necessary (otherwise no information can be transmitted).

\paragraph{Computationally-Bounded Adversary Model:} While our
previous models did not make any computational assumptions on the
parties involved, we now turn to study the case in which Calvin is
computationally bounded (as before Calvin is all-knowing, excluding any secret keys, and limited by the number of links $z$ under his control).
In this setting, we present two results.
The first result uses the notion of symmetric key cryptography and
is based on our random secret model.
Roughly speaking,
in the case Calvin is computationally bounded, one may replace the
random secret in the random secret model, by a series of {\em pseudo-random} bits:
bits that would still look completely random (i.e., uniform) to Calvin. Now, to
generate an (effectively) unlimited amount of shared pseudo-random
bits, to be used in {\em several} executions of the random secret protocol, all Alice and Bob
need to do is exchange a {\em single} short secret key prior to
the communication process. This single key, and the bits it
generates may be used over essentially unlimited time to communicate at high
rate.

The second result addresses the public key cryptographic setting.
In this setting, each of the parties, Alice and Bob, hold a pair
of keys: a private key (known only to itself) and a public key
(known to all --- including Calvin). Encrypted point to point
communication between Alice and Bob can be done using these public
and private keys; {\em without} Alice and Bob ever meeting in
advance to exchange a shared secret key. However, in the model we
study, no point to point channel is available - and Alice would
like to communicate at high rate to Bob over a given network. We
present a network coding scheme for the model at hand. Our scheme
is based on the scheme we present for the causal-omniscience model,
with the sole difference that public-key encryption is used to
hide some of Alice's information from Calvin.

As common in the study of cryptographic primitives, both our
results are conditional --- in the sense that they hold assuming
that certain cryptographic primitives exist (such as the
assumption that factoring is hard). Under such assumptions, we
prove in the symmetric key setting that our scheme obtains an
optimal rate of $C-z$, and in the (weaker) public key setting we
obtain the same optimal rate under the condition that $C>2z$.
In this model Calvin is no longer causal, however, as
in the causal-omniscient model, it can be seen that the upper bounds of
\cite{JagLHE:05} imply that the latter requirement of $C>2z$ is
necessary.

We note that in the public key scenario, we assume that Alice
knows Bob's public key. For this reason, the public-key model
seems particularly suitable in settings where cryptography is
already involved (e.g., to ensure privacy and integrity of the
communication). In such a scenario, a public-key infrastructure
may already be available and computational limitations on the
adversary are usually already assumed.

The remainder of the paper is organized as follows. Section~\ref{sec:prelim} contains
the model definitions and notation.
Sections~\ref{sec:ss-model},~\ref{sec:bfo-model}~and~\ref{sec:cb-model}
discuss the three new models and schemes presented above.

\section{Preliminaries}
\label{sec:prelim} In this section we give the definitions and
notation that are required to model network coding in various
adversarial models. Our definitions and notation mostly
follow~\cite{JaggiLKHKM07}.

\paragraph{Network Model}
The network will be modelled as a graph. We assume our graphs are
acyclic, and the communication over them is done in a synchronous
manner. Namely, in each time step a single packet of information
can traverse an edge of the network.

\paragraph{Network-coding schemes}
We will consider the task of routing information over the network
from a single sender Alice to multiple receivers Bob (the setting
of multicast). In fact, in our analysis, it will usually be sufficient to consider
a single receiver Bob. The reason is that in the schemes we
suggest, neither Alice nor the network need to be aware of the
location of Bob in the network. Therefore, it will be
possible to extend each one of our schemes from the case of a
single receiver to the case of multiple receivers (this state of affairs is common in the study of multicast network coding, e.g. \cite{JaggiLKHKM07}).
We will therefore continue the
formalization, assuming a single receiver (and will address the
setting of multicast separately for each one of our schemes).

We will not assume that Alice, Bob or any other internal node is
aware of the network topology or of the location of Alice and Bob
in the network. The network topology will only influence the
maximal achievable rate. A network-coding scheme is defined by
Alice's \defword{encoder}, Bob's \defword{decoder}, and the
\defword{coding} performed in internal nodes. We will now discuss
those three components.

Let $M$ be the message Alice wishes to transmit to Bob. The
encoding algorithm of Alice adds some redundancy into the message,
thus obtaining an encoded message $X$. This information is routed
through the network, where it is further encoded (as a result of
the network coding). Bob receives encoded information that may
also encompass network faults. Bob's decoding algorithm, applied
to the encoded information, is supposed to factor out the network
faults and retrieve the original message $M$.

It is convenient to assume that Alice's encoded message $X$ is
represented by a $\BatchSize \times \Block$ matrix, where every
entry of the matrix is an element from a finite field
$\Field_{\fs}$.
We refer to a column as a \defword{slice}, to a
row as a \defword{packet}, and to each entry as a
\defword{symbol}. It is also useful to note that in all of the
schemes of~\cite{JaggiLKHKM07}, as well as ours, $X$ is composed
of the original message $M$, and in addition some $\delta\Block$
slices of redundancy. In other words, the size of $M$ is
$(1-\delta)$ that of $X$.

The specific coding performed by internal nodes is less relevant
to our work, as it is inherited without change
from~\cite{JaggiLKHKM07}. For concreteness, let us mention that
internal nodes, as well as Alice herself, perform random linear
network coding a la~\cite{HoMKKESL06}. Namely, for each of its
outgoing links, a node selects random coefficients of a linear
transformation over $\Field_{\fs}$ (the number of coefficients is
$\BatchSize$ for Alice and equals the indegree for any internal
node). The network coding of each of the slices of $X$ goes as
follows: First Alice sends on each of her out-going edges the
corresponding linear transformation of the symbols in the slice.
Whenever an internal node receives a symbol on each of its
incoming edges (which is in itself a linear transformation of the
slice's symbols), it sends on its outgoing edges the corresponding
transformation of those symbols. As common in the literature of network coding, as our graphs are acyclic, we assume that information from different
slices is not mixed throughout the communication process. This
can be established by sufficient memory at internal nodes of the
network.

\paragraph{Adversarial Model} Each one of the adversarial models
we consider in this paper is specified by the exact power of the
adversary Calvin. We mention here the common properties of Calvin.

Calvin has under his control $\numz$ network's links of his
choice.\footnote{The parameter $\numz$ represents Calvin's power.
It is possible to define $\numz$ as the min cut between Calvin's
links and Bob. This is at most, but may be strictly smaller than
the number of links under Calvin's control.} On these links Calvin
may inject his own packets, disguised as part of the information
flow from Alice to Bob. Calvin succeeds if Bob decodes a message
different than Alice's original $M$. The goal of the
network-coding scheme is to ensure this only happens with very
small probability while maximizing the rate in which information
flows from Alice to Bob.

We do not assume that Alice, Bob or any internal node are aware of
the links under Calvin's control. On the other hand, Calvin has
full knowledge of the network topology as well as the identity of
Alice and Bob. In all of our models we assume that Calvin has full
eavesdropping capabilities (i.e., Calvin can monitor the entire
communication on each one of the links). Calvin knows the encoding
and decoding schemes of Alice and Bob, and the network code
implemented by the internal nodes (including the random linear
coefficients). Furthermore, in our proofs, we assume that Calvin
selects the message $M$ that Alice transmits. This ensures
that our schemes work for {\bf every} message $M$ Alice sends to
Bob.

The \defword{network capacity}, denoted by $\Cp$, is the maximum
number of symbols that can be delivered on average, per time step,
from Alice to Bob, assuming no adversarial interference (i.e., the
max flow of information from Alice to Bob). The network capacity
is known to equal {\em the min-cut from Alice to Bob}. (For the
corresponding multicast case, $\Cp$ equals the minimum of the
min-cuts over all destinations.)
For a message $M$, the \defword{error probability $\errorprob(M)$}
is the probability that Bob reconstructs a message different from
Alice's message $M$. The (maximum) error probability of the
encoding scheme is defined to be $\errorprob=\max_M\{
\errorprob(M)\}$ (Here the maximization is taken over the message
$M$ of Alice).
The \defword{rate} is the number of {\em information} symbols that
can be delivered on average, per time step, from Alice to Bob . In
the parameters above, the rate equals $(1-\delta)\cdot\BatchSize$
(recall that $\delta$ is the fraction of redundant slices). Rate
$\Rate$ is said to be achievable if for any $\alpha>0$ and
$\epsilon>0$ there exists a coding scheme of block length $\Block$
with rate $\geq \Rate-\alpha$ and error probability $\errorprob
\leq \epsilon$.

\subsection{Building Blocks of our Schemes}
\label{sec:tools} Our network-coding schemes rely on the schemes
of~\cite{JaggiLKHKM07}, given in two adversarial models: the
omniscient adversary model and the secret channel model. We
discuss those schemes here.

\subsubsection{A scheme in the omniscient-adversary model}
In the omniscient adversary model, we put no restrictions on the
knowledge and ability of Calvin (see discussion in
Section~\ref{sec:prelim}). In this model,~\cite{JaggiLKHKM07} gave
a distributed polynomial-time scheme $\Aomn$, and proved for it
the following theorem:
\begin{theorem}[\cite{JaggiLKHKM07}]
$\Aomn$ achieves a rate of $\Cp - 2\numz$, in the
omniscient-adversary model, with code-complexity ${\cal
O}((\Block\Cp)^3)$. \label{thm:omn}
\end{theorem}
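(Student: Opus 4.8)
The plan is to build $\Aomn$ by wrapping a thin error-correcting layer around the distributed random linear network code of \cite{HoMKKESL06}: internal nodes keep doing oblivious random linear coding, and all the design effort goes into Alice's encoder and Bob's decoder.

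First I would fix the algebraic model of what Bob sees. Working slice by slice, the information Bob collects on his $\Cp$ incoming links can be written as $Y = T X + E Z$, where $X$ is Alice's $\BatchSize \times \Block$ codeword, $T$ is the (random) Alice-to-Bob transfer matrix, $Z$ collects Calvin's injected packets --- of which at most $\numz$ are independent, since he controls $\numz$ links --- and $E$ is the (random) Calvin-to-Bob transfer matrix. Over a field $\Field_{\fs}$ of size $\poly(\Block,\Cp)$, the standard argument shows that with probability $1-o(1)$ in $\fs$ the matrix $T$ has full rank $\Cp$ and, more generally, that the row spaces of $X$ and of $Z$ sit in general position inside the $\Cp$-dimensional received space; every way this can fail is a bad event of probability vanishing in $\fs$, so it can be pushed below the target $\epsilon$.

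Next, Alice's encoder. Take $\BatchSize = \Cp - 2\numz$, so that Bob's $\Cp$-dimensional received space has $2\numz$ dimensions of slack over Alice's $\BatchSize$ packets together with Calvin's $\numz$ error packets; this is precisely the redundancy an MDS-type construction needs in order to correct $\numz$ network errors (network minimum distance $2\numz+1$), and it is tight against the converse implicit in \cite{JagLHE:05}. Alice fills $(1-\delta)\Block$ of the slices with the message $M$ and devotes the remaining $\delta\Block$ slices to a \emph{sublinear} amount of bookkeeping: coding-vector headers that let Bob learn and invert $T$ as in \cite{HoMKKESL06}, plus hash/consistency symbols computed from $M$ that let Bob identify Calvin's at-most-$\numz$-dimensional error subspace. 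Sublinearity of the overhead forces $\delta \to 0$, so the rate $(1-\delta)\BatchSize$ tends to $\Cp - 2\numz$.

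Finally, Bob's decoder: read the headers to obtain a basis of the received space, use the consistency symbols to split off the error subspace, invert $T$ on the remaining clean subspace, and read $M$ off the message slices. Each of these is linear algebra on matrices of dimension $O(\Block\Cp)$, giving running time $O((\Block\Cp)^3)$, and since the internal nodes run only the oblivious protocol of \cite{HoMKKESL06} the whole scheme is distributed. The main obstacle is this decoding step: since there is \emph{no} shared secret, Calvin knows the hashes, so they provide no information-theoretic hiding --- only structural leverage --- and one must design the hash/consistency layer so that it simultaneously (i) pins down the error subspace, (ii) costs only $o(\Block)$ slices, and (iii) admits polynomial-time inversion. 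That is the technical heart of \cite{JaggiLKHKM07}; by comparison, the general-position lemmas and the $\fs\to\infty$ error analysis are routine.
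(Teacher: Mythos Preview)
This theorem is not proved in the paper: it is a cited result from \cite{JaggiLKHKM07}, stated in Section~\ref{sec:tools} solely as a black-box building block for the schemes $\Abfo$ and $\Apk$. There is therefore no ``paper's own proof'' to compare your proposal against.

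Your sketch is a reasonable high-level outline of the kind of argument the original reference carries out --- the $Y = TX + EZ$ model, the general-position analysis over a large field, the choice $\BatchSize = \Cp - 2\numz$, and the linear-algebraic decoder --- but note that in the present paper none of this is needed or developed; $\Aomn$ is invoked only through the statement of Theorem~\ref{thm:omn}, and the actual work here lies in the reductions (Theorems~\ref{thm:ss},~\ref{thm:bfo},~\ref{thm:pk}) that compose $\Aomn$ and $\Asc$ modularly.
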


\subsubsection{A scheme in the secret-channel model}
In the secret channel model, we assume that Alice can secretly
send Bob a (short) message that is completely hidden from Calvin.
We put no additional restrictions on the knowledge and ability of
Calvin (see discussion in Section~\ref{sec:prelim}). In this
model,~\cite{JaggiLKHKM07} gave a distributed polynomial-time
scheme $\Asc$, and proved for it the following theorem:
\begin{theorem}[\cite{JaggiLKHKM07}]
$\Asc$ achieves a rate of $\Cp - \numz$, in the secret-channel
model, with code-complexity ${\cal O}(\Block\Cp^2)$. The
communication on the secret channel consists of at most $C^2+C$
symbols.\label{thm:sc}
\end{theorem}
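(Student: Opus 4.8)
The plan is to analyze $\Asc$ when it is run with batch size $\BatchSize=\Cp$ and a fraction $\delta=\numz/\Cp$ of redundant slices, so that its rate is $(1-\delta)\BatchSize=\Cp-\numz$, over a field $\Field_{\fs}$ with $\fs=\poly(\Block,\Cp)$ taken large enough both for the random linear network code of \cite{HoMKKESL06} to carry capacity $\Cp$ to Bob except with probability $\epsilon/2$ and for the hashing estimate below. The first step is the standard reduction to a canonical channel: a $\Cp$-symbol coding header on each packet costs only a $\Cp/\Block$ additive loss in rate and lets Bob write his received batch as $Y=TX+E$, with $T$ the (random, and w.h.p.\ invertible) $\Cp\times\Cp$ network transfer matrix and $E$ Calvin's total injection. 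Since everything Calvin injects must traverse the cut of $\numz$ links separating him from Bob, $E=T_{\numz}Z$ for some $\numz\times\Block$ matrix $Z$, so $\mathrm{rank}(E)\le\numz$; after inverting the (corrupted) header Bob holds $\hat X:=X+E'$ where $E':=T^{-1}E$ still has rank at most $\numz$. The problem is thus reduced to: recover the codeword $X$ from $\hat X$, knowing $\mathrm{rank}(\hat X-X)\le\numz$ and using a secret unseen by Calvin.

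For the encoder, Alice puts $M$ into the $(1-\delta)\Block$ data slices, appends $\delta\Block=(\numz/\Cp)\Block$ redundant slices implementing a fixed public parity check $X\Pi=0$ for an $\Block\times\delta\Block$ matrix $\Pi$, and spreads $M$ across the $\Cp$ packets in MDS fashion, so that any $\Cp-\numz$ independent linear combinations of the packets of $X$ already determine $M$. Over the secret channel she sends $\Cp$ uniformly random elements $\rho_1,\dots,\rho_{\Cp}\in\Field_{\fs}$ and the $\Cp\times\Cp$ array of polynomial hashes $h_{ik}=\sum_{j}X_{ij}\,\rho_k^{\,j}$ of her $\Cp$ packets --- at most $\Cp^2+\Cp$ symbols, matching the claim. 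The decoder then combines what it knows about $\hat X$: that $X$ obeys the public parity, that $\hat X-X$ has rank $\le\numz$, and --- recomputing hashes from the $\rho_k$ Alice sent --- that the hashes of $X$ equal the received $h_{ik}$. From the public parity $\hat X\Pi=E'\Pi$ and, where that relation is rank-deficient, from the secret hashes, Bob pins down a $\le\numz$-dimensional subspace $\mathcal U$ of the packet space containing the column space of $E'$; projecting $\hat X$ onto a complement of $\mathcal U$ annihilates $E'$ and leaves $\Cp-\numz$ correct linear combinations of the packets of $X$, from which the MDS spreading recovers $M$. Each move is a constant number of linear-algebra operations on the $\Cp\times\Block$ batch, for a total of $O(\Block\Cp^2)$ field operations, as claimed; it remains to bound the error probability by $\epsilon$.

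The delicate point --- and the crux of the argument --- is showing that Bob infers the subspace $\mathcal U$ correctly except with probability $O(\Block\numz/\fs)$. The secret cannot be used naively: a union bound over \emph{all} rank-$\le\numz$ perturbations consistent with $\hat X$ and the public code is hopeless, since there may be $\fs^{\Theta(\numz\Block)}$ of them while one secret hash rejects a given wrong codeword only with probability $1-\Block/\fs$. One must instead use that Calvin commits to $E$ (hence $E'$) while knowing $X$ and the whole network code but nothing about $(\rho_k,h_{ik})$: after conditioning on that choice, the received word together with the $\Cp^2$ secret hash values forms a system of random linear constraints that pins down $\mathcal U$ --- and thereby $E'$ after the projection --- with only the stated failure probability over the $\rho_k$. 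Taking $\fs$ a sufficiently large polynomial in $\Block$ and $\Cp$ drives this below $\epsilon/2$; with the $\epsilon/2$ failure of the underlying random linear code this gives total error at most $\epsilon$, establishing that $\Asc$ achieves rate $\Cp-\numz$ in the secret-channel model with the stated code complexity and secret-channel usage.
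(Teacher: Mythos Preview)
The paper does not prove this theorem: it is quoted verbatim as a result of \cite{JaggiLKHKM07} and is used only as a black box (Theorems~\ref{thm:ss}, \ref{thm:bfo}, and \ref{thm:pk} all invoke it without reopening its proof). So there is no ``paper's own proof'' to compare against; the relevant comparison is between your sketch and the partial description of $\Asc$ that the paper \emph{does} give in Section~\ref{sec:ss-model}.

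On the secret, you match the paper exactly: the $\Cp$ uniform parity symbols $\rho_1,\dots,\rho_{\Cp}$ are the paper's vector $R$, and your hashes $h_{ik}=\sum_j X_{ij}\rho_k^{\,j}$ are precisely the entries of $H=X\cdot P$ with $P_{jk}=\rho_k^{\,j}$, giving the $\Cp^2+\Cp$ symbol count. On the encoder, however, your description diverges from what the paper records about $\Asc$. The paper states explicitly that Alice's encoding is simply $X=[M\ I]$ --- the message with a $\BatchSize\times\BatchSize$ identity block appended --- with no public parity check $X\Pi=0$ and no MDS spreading of $M$ across packets. Your decoder leans on both of those extra ingredients (you use $\hat X\Pi=E'\Pi$ to locate the error subspace and MDS redundancy to finish after projecting it out), so your argument is really for a different, more structured scheme than the $\Asc$ the paper cites. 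In the actual $\Asc$, the appended identity block is what lets Bob read off the global linear transform, and the secret hashes alone --- not a public parity --- are what isolate the rank-$\le z$ error; the correctness proof in \cite{JaggiLKHKM07} is a polynomial-identity/Schwartz--Zippel argument over the randomness in $R$, which your last paragraph gestures at but does not carry out for the scheme as stated.

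In short: the theorem is cited, not proved, here; your secret-channel accounting is right, but your encoder/decoder sketch is for a scheme that is not $\Asc$ as this paper describes it, and the step where ``public parity plus hashes pin down $\mathcal U$'' would need to be replaced by the hash-only argument that actually underlies $\Asc$.
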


In Section~\ref{sec:ss-model}, we give some more details on the
way the secret message is defined in $\Asc$.

\subsection{Proof techniques: reduction and worst case
analysis}


A scheme is said to be (information theoretically) secure against
an adversarial entity Calvin, if for any behavior of Calvin, Alice
is able to communicate her information to Bob (with high
probability). Loosely speaking, we think of Calvin as an
algorithmic procedure, which given certain inputs (such as the
network topology, Alice's information and the network code applied
by the network), computes which edges in the network to corrupt
and which error message to transmit.

There are several proof paradigms that can be used in an attempt
to establish the correctness of a given coding scheme. In this
work, the correctness of our coding schemes will be proven by
means of {\em reduction}. Namely, we build upon the results of
\cite{JaggiLKHKM07}, and prove that any adversarial entity Calvin that
breaks our schemes will imply an additional adversary (usually
referred to as Calvin') that will not allow communication in one
of the schemes presented in \cite{JaggiLKHKM07}.

More specifically, our proofs can be outlined as follows. We first
define our coding schemes. We will then assume for sake of
contradiction that they are not secure. As we would like our
schemes to be secure for {\em any} message $M$ of Alice, this will
imply the existence of an adversary Calvin that first chooses
which message $M$ Alice should send to Bob, and then is able to
corrupt the communication of $M$ between Alice and Bob. Thinking
of Calvin as an algorithmic procedure, we show how to define the
additional adversary Calvin' --- which is a procedure based on
Calvin. Finally we show that Calvin' is able to break one of the
(provably secure) schemes presented in \cite{JaggiLKHKM07} --- this
suffices to conclude our proof.

\section{Random-Secret Model}
\label{sec:ss-model} The random-secret model is similar to the
secret-channel model of \cite{JaggiLKHKM07} with the difference
that the secret information sent from Alice to Bob should be
random and independent of Alice's input message $M$. Formally, we
allow Alice to share with Bob a short secret (which is uniformly distributed). This secret will stay hidden from
Calvin.

\paragraph{Alice's secret and message encoding in $\Asc$}
Recall that $\Asc$ is the scheme presented in \cite{JaggiLKHKM07} for communication in the secret-channel model.
We will show how to transform $\Asc$ to a comparable scheme which
works in the random secret model. The only ingredients of $\Asc$
we need to recall is the structure of Alice's secret and of
Alice's encoder. The encoding of $M$ into $X$ is very simple: We
assume that Alice's message $M$ is a $\BatchSize \times
(\Block-\BatchSize)$ matrix over $\Field_\fs$. The matrix $X$ is
$M$ concatenated with the $\BatchSize \times \BatchSize$ identity
matrix, $I$. Namely, $X=[M\ I]$.

Alice's secret message is computed in two steps. She first chooses
$\Cp$ {\em parity symbols} uniformly at random from the field
$\Field_\fs$. The parity symbols are labelled $\rndm_\dummy$, for
$\dummy \in \{1, \ldots, \Cp\}$. We denote by $R$ the vector of
parity symbols. Corresponding to the parity symbols, Alice's {\em
parity-check matrix} $\parity$ is defined as the $\Block \times
\Cp$ matrix whose $(i,j)^{th}$ entry equals $(\rndm_j)^i$, i.e.,
$\rndm_j$ taken to the $i^{th}$ power. The second part of Alice's
secret message is the $\BatchSize\times \Cp$ {\em hash matrix}
$\hash$, computed as the matrix product $X\cdot \parity$. The
secret message sent by Alice to Bob on the secret channel is
composed of both $R$ and $\hash$. As $\Block \leq \Cp$, we indeed
have a secret of at most $C^2+C$ symbols.

\paragraph{A useful property of $\Asc$} Note that the vector $R$ of Alice's secret is
already uniform and independent of the message $M$. On the other
hand, the hash $\hash$ is a deterministic function of $R$ and $M$
(given by the equation $\hash = X\cdot \parity$). Our main
observation (which we will prove below) is the following: for
almost every value of $R$, when $M$ is uniform then $\hash$ is
uniform as well. Furthermore, it is enough that a small chunk of
$M$ will be uniform to guarantee the uniformity of $\hash$. This
suggests the following idea: instead of selecting $\hash$ as a
function of Alice's message, we can select both $R$ and $\hash$
uniformly at random. Later, Alice can tweak the message a bit such
that we indeed get $\hash = X\cdot
\parity$. We continue to formalizing this idea.

\subsection{Defining the new scheme $\Ass$}
We now show how to transform $\Asc$ into a scheme $\Ass$ with
comparable performance in the random secret model. To define the
scheme we now define the random secret, Alice's encoder, Bob's
decoding, and the coding in internal nodes.

\paragraph{The random secret} The secret shared between Alice and
Bob is composed of a length-$\Cp$ vector $R$ over $\Field_\fs$ and
a $\BatchSize\times \Cp$ matrix $H$ over $\Field_\fs$. Both are
selected uniformly at random (and independently of each other).

Even though $R$ and $H$ are selected uniformly, their function in
$\Ass$ is identical to the function of $R$ and $H$ in $\Asc$. We
therefore use the same notation as given above. In particular, we
refer to $H$ as the hash matrix. The elements of $R$ are referred
to as the parity symbols and denoted $\rndm_\dummy$, for $\dummy
\in \{1, \ldots, \Cp\}$. Furthermore, we define the corresponding
parity-check matrix $\parity$ as before.

\paragraph{Alice's encoder} We allow Alice to encode a slightly
shorter input message $M$ assumed to be $\BatchSize \times
(\Block-\BatchSize-\Cp)$ matrix over $\Field_\fs$. Alice encodes
$M$ into a $\BatchSize \times \Block$ matrix $X=[L\ M\ I]$, where
$L$ is a $\BatchSize \times \Cp$ matrix and $I$ is the $\BatchSize
\times \BatchSize$ identity matrix. The matrix $L$ is defined
(arbitrarily) such that $\hash=X\cdot\parity$. We show shortly
that this system of linear equations (on the elements of $L$) will
have a unique solution with high probability over $H$ and $P$. If this
system has no solution or more than a single solution we define $L$ arbitrarily (say, to be the
all-zero matrix).

\paragraph{Network coding and Bob's decoder}
Both the network coding and Bob's decoder are defined in the same
way as in $\Asc$ \cite{JaggiLKHKM07}. Once Bob decodes a matrix
$[\bar{L}\ \bar{M}]$, Bob discards of the $\BatchSize \times \Cp$
prefix $\bar{L}$ and outputs $\bar{M}$.

\subsection{Properties of $\Ass$}

We now state and prove the properties of $\Ass$ that are almost
identical to those of $\Asc$:

\begin{theorem}
$\Ass$ is a distributed polynomial-time scheme. $\Ass$ achieves a
rate of $\Cp - \numz$, in the random-secret model, with
code-complexity ${\cal O}(\Block\Cp^2)$. The random secret
consists of at most $C^2+C$ symbols.\label{thm:ss}
\end{theorem}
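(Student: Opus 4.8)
The plan is to prove Theorem~\ref{thm:ss} by reduction to Theorem~\ref{thm:sc}, using the ``useful property'' of $\Asc$ advertised above. The only genuinely new ingredient relative to $\Asc$ is that in $\Ass$ the hash matrix $H$ is drawn uniformly and independently of the message, rather than being computed as $H = X \cdot P$. So the heart of the argument is a coupling/change-of-variables claim: conditioned on a ``good'' value of $R$ (equivalently, of $P$), the map sending the free block $L$ of $X = [L\ M\ I]$ to $H = X \cdot P$ is a bijection between $\Field_\fs^{\BatchSize \times \Cp}$ and $\Field_\fs^{\BatchSize \times \Cp}$. Writing $P = \binom{P_1}{P_2}$ split into its first $\Cp$ rows and last $\Block - \Cp$ rows, we have $X \cdot P = L \cdot P_1 + [M\ I] \cdot P_2$, so for a fixed $R,M$ the equation $H = X\cdot P$ in the unknown $L$ is the affine system $L \cdot P_1 = H - [M\ I]\cdot P_2$; this has a unique solution exactly when the $\Cp \times \Cp$ Vandermonde-type matrix $P_1$ (whose $(i,j)$ entry is $r_j^i$) is invertible, which happens iff the parity symbols $r_1,\dots,r_\Cp$ are nonzero and pairwise distinct. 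Call such an $R$ (equivalently $P$) \emph{nondegenerate}; since $R$ is uniform over $\Field_\fs^\Cp$, the probability $R$ is degenerate is at most $\binom{\Cp}{2}/\fs + \Cp/\fs$, which tends to $0$ as the field size grows with $\Block$ (exactly the regime in which achievability is defined). This justifies the claim in the scheme definition that $L$ is the unique solution with high probability, and shows that when $R$ is nondegenerate, $(R,H)$ with $H$ uniform together with $M$ induces exactly the same joint distribution on $(R, X\cdot P)$ — hence on everything Bob and Calvin ever see — as the $\Asc$ encoding applied to a \emph{uniformly random} message of the same shape does, conditioned on the same $R$.

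Next I would set up the reduction proper, following the ``reduction and worst-case analysis'' paradigm described in Section~\ref{sec:tools}. Suppose for contradiction that $\Ass$ is not secure at rate $\Cp - \numz$: there is a Calvin who picks a message $M$ for Alice (that is, a $\BatchSize \times (\Block - \BatchSize - \Cp)$ matrix) and then, with non-negligible probability over the random secret $(R,H)$, the network coefficients, and Calvin's own coins, causes Bob to decode $\bar M \ne M$. I would build Calvin' attacking $\Asc$ as follows. Alice in $\Asc$ sends a $\BatchSize \times (\Block - \BatchSize)$ message; Calvin' designates its first $\BatchSize \times \Cp$ block to play the role of $L$ and the remaining $\BatchSize \times (\Block - \BatchSize - \Cp)$ block to play the role of $M$, and simply instructs Alice (in $\Asc$) to use message $[L'\ M]$ where $M$ is Calvin's chosen message and $L'$ is chosen uniformly at random (Calvin' picks $L'$ itself). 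Then $\Asc$'s own secret generation produces $R$ uniform and $H = [L'\ M\ I]\cdot P$. Calvin' now runs Calvin on the resulting transcript. The point of the uniformity property is that, conditioned on $R$ nondegenerate, the pair $(R, H)$ generated this way is \emph{uniform and independent of $M$} — precisely the distribution of the random secret in $\Ass$ — and moreover the matrix $X = [L'\ M\ I]$ that $\Asc$'s Alice actually transmits is exactly the matrix $\Ass$'s Alice would have transmitted given secret $(R,H)$ and message $M$ (by uniqueness of $L$). So the view Calvin is fed is distributed identically to its view in the real $\Ass$ attack (up to the negligible degenerate-$R$ event), hence with non-negligible probability Bob's decoder — which is literally the same in both schemes, with $\Ass$'s Bob additionally discarding the recovered $\bar L$ prefix — recovers $[\bar L\ \bar M]$ with $\bar M \ne M$. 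Since the transmitted message in $\Asc$ was $[L'\ M]$, Calvin' has caused $\Asc$'s Bob to decode a message differing from Alice's in its last block, contradicting Theorem~\ref{thm:sc}.

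Finally I would dispatch the ``bookkeeping'' parts of the statement. The rate: Alice's message in $\Ass$ is $\BatchSize \times (\Block - \BatchSize - \Cp)$ versus $\BatchSize \times (\Block - \BatchSize)$ in $\Asc$, i.e.\ $\delta$ increases by only $\Cp/\Block$, so for $\Block$ large the rate loss is $o(1)$ and the achievable rate is still $\Cp - \numz$ in the formal (for-all-$\alpha$) sense; the secret size is unchanged at $\Cp^2 + \Cp$ symbols since $R$ is a length-$\Cp$ vector and $H$ is $\BatchSize \times \Cp$ with $\BatchSize \le \Cp$. Distributed polynomial time and the ${\cal O}(\Block\Cp^2)$ code-complexity bound: the network coding and Bob's decoder are copied verbatim from $\Asc$, so the only extra work is Alice solving the $\Cp \times \Cp$ linear system $L\cdot P_1 = H - [M\ I]\cdot P_2$ for $L$, which costs ${\cal O}(\Cp^3)$ field operations plus the ${\cal O}(\Block\Cp)$ to form $[M\ I]\cdot P_2$ — absorbed into the existing bound — and multicast extends to many Bobs exactly as in $\Asc$ since neither Alice nor the network depends on Bob's location. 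I expect the main obstacle to be stating the coupling cleanly: one must be careful that Calvin in the $\Ass$ game is allowed to choose $M$ as a function of the scheme but \emph{not} of the secret $(R,H)$, so that feeding Calvin a transcript built from $\Asc$ with message $[L'\ M]$ (where $L'$ is independent of everything) really does reproduce Calvin's real-world view — and to handle the negligible-probability degenerate-$R$ event (and the event that $\Asc$'s internally generated secret happens to collide with the all-zero fallback case) without it leaking into the adversary's advantage. Everything else is routine.
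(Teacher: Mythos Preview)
Your proposal is correct and follows essentially the same approach as the paper: build Calvin' by having it set Alice's $\Asc$-message to $[L'\ M]$ with $L'$ uniform, then argue via the Vandermonde invertibility of the top $\Cp\times\Cp$ block of $P$ that (conditioned on the nondegenerate-$R$ event, which the paper calls $\Ebad^c$) the induced distribution on $(R,H,L)$ matches the $\Ass$ distribution exactly, so Calvin's success probability transfers up to the $O(\Cp^2/\fs)$ degenerate-event correction. Your bookkeeping on rate, secret size, and the extra $O(\Cp^3)+O(\Block\Cp)$ cost for solving for $L$ is also in line with the paper's treatment.
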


\begin{proof}
We will prove that the probability that Bob decodes correctly in
$\Ass$ is almost identical to the probability that Bob decodes
correctly in $\Asc$. The theorem will then follow immediately from
the definition of $\Ass$ and from Theorem~\ref{thm:sc}. We note
that even though Alice is able to send to Bob a little bit less
information in $\Ass$ than in $\Asc$ (specifically, Alice sends
$\BatchSize \cdot \Cp$ fewer elements of $\Field_\fs$), the rate
in both schemes is identical (as we consider the rate as $\Block$
goes to infinity).

Let us consider an adversary Calvin that makes $\Ass$ fail with
probability $\epsilon$. In particular, Calvin may chose a message
$M$ for Alice to send s.t.\ with probability $\epsilon$, Bob
reconstructs $\bar{M}$ which is different than $M$. We will define
an adversary Calvin' that makes $\Asc$ fail with probability
$\epsilon'\geq \epsilon-\Cp^2/\fs$. This will conclude our proof.

Calvin' is defined as follows.
First Calvin' imitates the message selection of Calvin (namely,
Calvin' uses the message $M'$ Calvin would have chosen given the
topology and the code of the network). If Calvin sets Alice's
input to the message $M$ then Calvin' sets Alice's input to
$M'=[L\ M]$, where $L$ is a uniformly chosen $\BatchSize \times
\Cp$ matrix. Then Calvin' continues to mimic Calvin, and behaves
identically (in particular Calvin' sends the same messages as
Calvin would on the same corrupted links).

As we see, Calvin' tries to fail $\Asc$ by mimicking an attack of
Calvin on the execution of $\Ass$. The success of Calvin's attack
on the execution of $\Ass$ depends both on the message $X=[L\ M\
I]$ transmitted by Alice and the secret information $R,H$ shared
by Alice and Bob. Let $\Dreal$ be the distribution over triplets
$(R,H,L)$ obtained when $R$ and $H$ are selected uniformly at
random (and independently of each other) and the matrix $L$ is
defined to satisfy $\hash=X\cdot\parity$ if a single  such $L$
exists, and is defined to be the all-zero matrix otherwise. Let
$A$ be the set of triplets $(R,H,L)$ on which Calvin's attack
succeeds (here we are assuming Calvin to be a deterministic
adversary, however our analysis extends naturally to the case in
which Calvin may act based on random decisions also). Namely, the
success probability of Calvin can be formalized as $\Pr[A]$, where
the probability is over the distribution $\Dreal$.

Now consider the success probability of Calvin' on $\Asc$ averaged
over messages of the form $M'=[L\ M]$ (where $L$ is chosen at
random). As before this probability depends on the message $X=[L\
M\ I]$ sent by Alice and by the information $R,H$ shared by Alice
and Bob. Let $\Dsim$ be the distribution over triplets $(R,H,L)$
obtained when $R$ and $L$ are selected uniformly at random (and
independently of each other) and $\hash$ is defined to be
$X\cdot\parity$. Recall that Calvin' mimics the behavior of
Calvin, thus Calvin' succeeds on the triplet $(R,H,L)$ iff Calvin
succeeds on $(R,H,L)$. Hence, the average success probability of
Calvin over messages of the form $M'=[L\ M]$ can be formalized as
$\Pr[A]$, where the probability is now over $\Dsim$. Notice that
the subset $A$ of triplets $(R,H,L)$ is the set used above in the
discussion on $\Ass$.

In what follows we show that $\Dreal$ and $\Dsim$ are almost
identical. This will suffice to prove our assertion.

\begin{definition}
The event $\Ebad$ on $R$ happens either if one of the parity
symbols is selected to be zero or if any two of the parity symbols
are identical. In other words, $\Ebad$ happens if there exists
$\dummy \in \{1, \ldots, \Cp\}$ such that $\rndm_\dummy=0$, or if
for two distinct $\dummy, \dummy' \in \{1, \ldots, \Cp\}$, we have
that $\rndm_\dummy=\rndm{_\dummy'}$.
\end{definition}

Note that $\Ebad$ is defined both for $\Dreal$ and for $\Dsim$. In
both cases, $R$ is uniformly distributed. Therefore
$\Pr_{\Dreal}[\Ebad]=\Pr_{\Dsim}[\Ebad]$. Furthermore, it is easy
to argue that this probability is at most $\Cp^2/\fs$ (simply,
each of the $\Cp$ parity symbols is zero or identical to a
previously selected parity symbol with probability at most
$\Cp/\fs$). We are now able to formalize our main observation:
\begin{lemma}
Conditioned on $\Ebad$ not happening, the two distributions
$\Dreal$ and $\Dsim$ are identical.
\end{lemma}
\begin{proof}(of lemma)
Let us fix any value of $M$. Let us also fix any value of $R$ such
that $\Ebad$ does not happen. We will show that conditioned on
every such fixings, the distributions $\Dreal$ and $\Dsim$ are
identical.

Let us decompose the $\Block \times \Cp$ parity-check matrix
$\parity$ into a $\Cp \times \Cp$ matrix $V$ and an $(\Block-\Cp)
\times \Cp$ matrix $\parity'$, such that $\parity = \left [
\begin{array}{c}
V \\
\parity'
\end{array}
\right ]$. By the definition of $\parity$, the matrix $V$ is the
Van der Monde matrix that corresponds to the parity symbols in
$R$. Since we assumed that $\Ebad$ does not happen, we have that
the parity symbols are all distinct and non zero. Therefore $V$ is
invertible.

With this notation, we can rewrite the equation $\hash=X\cdot
\parity$ as follows:
$$\hash=[L\ M\ I]\cdot \left [
\begin{array}{c}
V \\
\parity'
\end{array}
\right ]=L\cdot V + [M\ I]\cdot \parity'.$$ Since we already fixed
$M$ and $R$, we have that $[M\ I]\cdot \parity'$ is a fixed
matrix, which we will denote as $\hash'$. We also have that $V$ is
a fixed invertible matrix. We denote by $V^{-1}$ its inverse. Now
we have that $\hash=L\cdot V + \hash'$, or alternatively that
$L=(\hash-\hash')\cdot V^{-1}$. We can conclude that for every
value of $\hash$ there is exactly one value of $L$ for which
$\hash=X \cdot \parity$. We therefore have that the equation
$\hash=X \cdot \parity$ forces a one-to-one correspondence between
the values of $L$ and the values of $\hash$. Therefore, the
uniform distribution over $L$ induces the uniform distribution
over $\hash$ and vise versa. The lemma follows.
\end{proof}

Recall that we defined $A$ be the set of triplets $(R,H,L)$ on
which Calvin's attack succeeds. It follows from the lemma that
conditioned on $\Ebad$ not happening, $\Pr[A]$ is identical under
$\Dreal$ and $\Dsim$. Since we already argued that
$\Pr_{\Dreal}[\Ebad]=\Pr_{\Dsim}[\Ebad]\leq$ $\Cp^2/\fs$, we can
conclude that the probability that $\Ebad$ does not happen and $A$
does happen is at least $\epsilon-\Cp^2/\fs$ (regardless of
whether the probability is taken over $\Dreal$ or $\Dsim$). We can
finally conclude that Calvin' succeeds in failing $\Asc$ with
probability at least $\epsilon-\Cp^2/\fs$.
\end{proof}

\paragraph{The case of multicast} In the above description
of $\Ass$, we considered for simplicity the case of a single Bob.
In the setting of multicast, there are two possible scenarios.
First, it may be the case that Alice and each of the Bobs share
the {\em same} random secret. $\Ass$ extends to this scenario with
no change (simply because $\Ass$ completely ignores the location
of Bob in the network). We now address the more general scenario,
where Alice may share a different secret with each one of the
Bobs.

Our main observation is that almost all of the information Alice
transmits (the matrix $X$) is {\em independent of the random
secret}. The only part of $X$ that does depend on the secret is
the matrix $L$. This matrix is rather small and its size is
independent of the block-length $\Block$. Therefore to extend
$\Ass$ to the setting of multicast, all we need to do is to have
Alice send a different matrix $L_i$ for each of the secrets she
shares. Since the number of Bobs is bounded by the size of the
graph, this only results in negligible rate loss. To decode, each
one of the Bobs ignores the communication which relates to other
secrets and only keeps the communication related to his $L_i$. Bob
then decodes exactly as in $\Ass$.

It remains to argue that with high probability each one of the
Bobs will decode Alice's message $M$ correctly. Let us consider
Calvin's attempt to fail the receiver Bob whose secret corresponds
to the matrix $L_i$. Our previous analysis implies that each one
of the matrices $L_j$ for $j\neq i$ are with high probability
uniform and independent of both $L_i$ and $M$. Therefore, these
additional matrices cannot assist Calvin in the attempt to fail
this particular Bob. We conclude that each of the receivers will
decode correctly with high probability, and therefore all of them
are likely to decode correctly.

\begin{remark}
In the above we assumed that the secret shared between Alice and
each receiver Bob includes the index $i$, such that $L_i$
corresponds to their shared secret. It is possible to avoid this
assumption as follows: (1) Let the random secret between Alice and
Bob also contain a random (almost pair-wise independent) hash
function $g_i$. Alice augments the message $M$ with $g_i(M)$ for
all of those hash functions $g_i$. (2) Continue as before and have
Bob decode according to each of the $L_j$'s (as now we assume that
Bob does not know $i$ such that $L_i$ corresponds to his secret).
Some of these decodings may result in $\bar{M}\neq M$. But with
very high probability none of the erroneous decodings will be
authenticated by a correct hash $g_i(\bar{M})$ (as for every $M$
and $\bar{M}$ we have that $g_i(\bar{M})$ is almost uniform and
independent of $g_i(M)$).
\end{remark}

\section{Causal-Omniscient Model}
\label{sec:bfo-model}

Recall that in our model of communication, the columns of the
matrix $X$ (namely, each  slice of information from $X$) is
encoded independently over time. Given the network's latencies
(the number of steps it takes for a message to traverse the
network), we have that while an internal node $v$ sends messages
that correspond to the $t^{th}$ column of $X$, Alice may already
be sending messages that correspond to column $t'>t$. Therefore,
in the model in which Calvin can eavesdrop on all links, it
inherently has a ``pick into the future". Namely, when sending
messages which correspond to the $t^{th}$ column of $X$, we assume that Calvin knows all the columns of $X$ up to column
$t+\Delta$, where $\Delta$ is some fixed parameter of the network.
It is not hard to verify that $\Delta$ is at most the size of the
edge set $E$. However, it may be the case that Calvin does not
necessarily know any later columns of $X$. This motivates the
definition of the Causal-Omniscient model.

In the Causal-Omniscient model, Calvin has unlimited computational
power. He has under his control $\numz$ network links of his
choice. On these links Calvin may inject his own packets,
disguised as part of the information flow from Alice to Bob. We do
not assume that Alice, Bob or any internal nodes are aware of the
links under Calvin's control. On the other hand, Calvin has full
knowledge of the network topology as well as the identity of Alice
and Bob. Calvin has full eavesdropping capabilities (i.e., Calvin
can monitor the entire communication on each one of the links).
Calvin knows the encoding and decoding schemes of Alice and Bob,
and the network code implemented by the internal nodes (including
the random linear coefficients). Furthermore, we assume that
Calvin knows which message $M$ Alice is sending to Bob.

The only limitation on Calvin is the following: while Calvin is
allowed access to the internal state and randomness of all
parties, he does not get such access to Alice's state and
randomness. Note that such a limitation is implicit in all other
limited adversarial models considered here and in
\cite{JaggiLKHKM07}.\footnote{For example, in the random-secret
model, one has to hide the secret-key which is expressed in
various computations of both Alice and Bob.} The desired
implication of this limitation for the Causal-Omniscient model is
the following: let $\Delta$ be a fixed parameter of the network
that specifies a bound on the latency of the network. By the
discussion above, if Calvin's messages correspond to columns of
$X$ up to its $t^{th}$ column then we assume that all columns
beyond column number $t+\Delta$ are hidden from Calvin.

\subsection{The scheme $\Abfo$}

We now define the scheme $\Abfo$ for the Causal-Omniscient model.
The scheme is obtained by a completely modular composition of two
schemes: A scheme $\Asc$ in the secret-channel model, and a scheme
$\Aomn$ in the omniscient-adversary model. See more details on the
schemes in Section~\ref{sec:prelim}. The idea of the composition
is simple: first Alice, Bob (and the network) execute $\Asc$ with
Alice's input $M$, but {\em without Alice sending the message on
the secret channel} (simply because a secret channel is not
available in this model). Unfortunately, without the secret
message, Bob cannot decode $M$ correctly yet. Therefore, to
transmit this secret information, we suggest that Alice and Bob
execute $\Aomn$ with the secret message as Alice's new input.
Unfortunately, $\Aomn$ may reveal the secret message to Calvin as
well. Our simple observation is that as long as the secret message
is revealed {\em after the execution of $\Asc$ ends}, it is too
late for Calvin to cause any harm. Therefore, all that we need (so
that $\Abfo$ works) is for Alice to send $\Delta$ ``garbage"
columns between the executions of $\Asc$ and of $\Aomn$. We turn
to a formal definition of $\Abfo$:

\paragraph{Alice's encoder}
Alice invokes the encoding and secret generating algorithms of
$\Asc$ on her input $M$. Denote by $X_M$ the output of the
encoding and $S$ the message to be sent on the secret channel. Now
Alice invokes an independent execution of the encoding algorithm
$\Aomn$ on $S$ as input. Denote by $X_S$ the output of the
encoding. For reasons that will be made clear shortly, Alice
encodes a secret $S$ such that $X_S$ will be of block length
$n_S=(n/\Cp)^{1/3}$ (here $n$ is the block length of our scheme
and $C$ is the capacity). Recall, that the size of $S$ (and thus
the block length of $X_S$) in the secret channel scheme is
independent of $n$ and significantly smaller than $n_S$. Hence,
such a blowup in the size of $X_S$ can be obtained for example by
an arbitrary padding of $S$ with irrelevant information. As we
will see, this blowup will enable our scheme to have a low
probability of error (without significantly increasing the
code-complexity). As $n_S$ is much smaller than $n$, our rate
remains optimal. Alice's encoder now outputs $X=[X_M\ 0\ X_S]$,
where $0$ denotes the zero matrix with $\Delta$ columns.

\paragraph{Network coding} As in $\Asc$ and $\Aomn$, the network
coding is the standard random-linear coding of \cite{HoMKKESL06}.

\paragraph{Bob's decoding} Bob first uses the decoder of $\Aomn$ on
the suffix of the communication (which corresponds to the columns
of $X_S$). Denote by $\bar{S}$ the decoded message. Bob now
applies the decoder of $\Asc$ on the prefix of the communication
(which corresponds to the columns of $X_M$), with the (relevant
parts of the) secret message set to $\bar{S}$. Bob outputs the
decoded message, which we denote by $\bar{M}$.

\subsection{Properties of $\Abfo$}

We state the parameters obtained by $\Abfo$ in the following
theorem.

\begin{theorem}
$\Abfo$ is a distributed polynomial-time scheme. $\Abfo$ achieves
a rate of $\Cp - \numz$, as long as $\Cp>2\numz$, in the
Causal-Omniscient model, with code-complexity ${\cal
O}(\Block\Cp^2)$. \label{thm:bfo}
\end{theorem}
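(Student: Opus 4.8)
The plan is to argue each of the three claimed properties — distributed polynomial time, rate $\Cp-\numz$, and correctness (low error probability) in the Causal-Omniscient model — by reducing to the two building-block theorems (Theorem~\ref{thm:omn} and Theorem~\ref{thm:sc}), exploiting the fully modular structure of $\Abfo$. The easy parts are the first two: since $\Abfo$ just runs the encoder/decoder of $\Asc$ on $M$ and then the encoder/decoder of $\Aomn$ on the (padded) secret $S$, with the network in between doing only the standard random-linear coding of~\cite{HoMKKESL06}, it is distributed; and the complexity is the sum of the two code-complexities on their respective block lengths, which with $n_S=(n/\Cp)^{1/3}$ is dominated by the ${\cal O}(\Block\Cp^2)$ term of $\Asc$. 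For the rate, note $X=[X_M\ 0\ X_S]$ has $\Delta$ garbage columns plus $n_S$ columns for $X_S$, both of which are $o(\Block)$ (with $\Delta\le|E|$ fixed and $n_S=(n/\Cp)^{1/3}=o(n)$), so the number of information symbols per time step is $(1-o(1))$ times the rate of $\Asc$ on its part, namely $\Cp-\numz$ — provided $\Cp>2\numz$, which is what is needed for the $\Aomn$ sub-scheme carrying $S$ to function at all (it needs $\Cp>2\numz$ to have positive rate $\Cp-2\numz$, and we must check the tiny payload $S$ still fits, i.e. $|S|\le C^2+C$ symbols padded to $n_S\cdot(\Cp-2\numz)$ total capacity, which holds for $n$ large).

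The substance of the proof is correctness. Here I would use the reduction paradigm described in the paper's ``proof techniques'' subsection. Suppose some causal-omniscient Calvin breaks $\Abfo$ with probability $\epsilon$: Calvin picks $M$, and with probability $\epsilon$ Bob outputs $\bar M\ne M$. Observe that Bob fails only if at least one of two sub-events occurs: either $\bar S\ne S$ (the $\Aomn$ decoder on the suffix returns the wrong secret), or $\bar S=S$ but the $\Asc$ decoder on the prefix returns $\bar M\ne M$ even when fed the \emph{correct} secret $S$. I would bound each sub-event by building a corresponding adversary against one of the provably-secure schemes. For the first sub-event, I construct Calvin$'_{\mathrm{omn}}$ against $\Aomn$: it simulates the whole network of $\Abfo$ internally (it knows everything Calvin knows about the $\Asc$-phase, since in that model Calvin is genuinely omniscient — this is where the causality of Calvin is crucial: Calvin's corruptions during the $\Asc$-phase depend only on columns up to $t+\Delta$, hence not on the columns of $X_S$, which come strictly after the $\Delta$ garbage columns), embeds the external $\Aomn$-instance as the suffix, and forwards Calvin's corruptions on the suffix links; if Calvin causes $\bar S\ne S$ then Calvin$'_{\mathrm{omn}}$ breaks $\Aomn$, so this sub-event has probability at most the error probability of $\Aomn$, which can be made $\le\epsilon/3$. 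For the second sub-event, I construct Calvin$'_{\mathrm{sc}}$ against $\Asc$ in the secret-channel model: it runs the external $\Asc$-instance as the prefix (with the genuine secret channel carrying $S$, which is exactly the setting of Theorem~\ref{thm:sc}), simulates the garbage columns and the $\Aomn$-phase internally, and forwards Calvin's prefix-corruptions; conditioned on $\bar S=S$, Bob's prefix-decoding in $\Abfo$ is identical to the $\Asc$-decoder with secret $S$, so if it errs then Calvin$'_{\mathrm{sc}}$ breaks $\Asc$, again with probability $\le\epsilon/3$. A union bound gives $\epsilon\le 2\epsilon/3+o(1)$ — wait, more carefully: the two sub-events' probabilities are each at most the (tunable) error probabilities of the respective sub-schemes, so $\epsilon\le \errorprob_{\Aomn}+\errorprob_{\Asc}$, and since both can be driven below any target $\epsilon_0/2$ by choosing $n_S$ and $n$ large, $\Abfo$ achieves error probability $\le\epsilon_0$.

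The main obstacle, and the step needing the most care, is the causality argument in the first reduction: I must verify precisely that a causal Calvin's behavior on the $\Asc$-phase links is a well-defined function of inputs that Calvin$'_{\mathrm{omn}}$ can supply \emph{without} knowing the external $\Aomn$-instance's randomness — i.e., that the $\Delta$-column gap genuinely quarantines the secret-carrying suffix from Calvin's view during the phase when corrupting the prefix could still matter. Concretely, I would formalize ``Calvin's corruption on edge $e$ at time $t$ is a function only of columns $1,\dots,t+\Delta$ of $X$'' and note that all of $X_M$'s columns are indices $\le \Block-\Delta-n_S$ while all of $X_S$'s columns are indices $> \Block-n_S$, so for any $t$ in the $\Asc$-phase, $t+\Delta \le \Block-n_S$, hence Calvin sees nothing of $X_S$ during that phase; thus Calvin$'_{\mathrm{omn}}$ can choose $S$ (equivalently $X_S$) to match its external challenge \emph{after} fixing Calvin's prefix behavior, and the two are independent as required. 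The second reduction is comparatively routine since $\Asc$ is analyzed against an adversary who is already omniscient (so no subtlety about what Calvin$'_{\mathrm{sc}}$ may see), the only thing to check being that the internal simulation of the $\Aomn$-phase is faithful and that feeding Bob $\bar S$ vs.\ $S$ coincide on the event we are conditioning on.

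I would finish by addressing multicast exactly as for $\Ass$: since neither $\Asc$ nor $\Aomn$ depends on Bob's location, $\Abfo$ extends to multiple receivers verbatim (with $S$ possibly receiver-dependent, handled as the small side-information, incurring only negligible rate loss).
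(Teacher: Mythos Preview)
Your proposal follows the paper's approach exactly: split the error into $\epsilon_1=\Pr[\bar S\neq S]$ and $\epsilon_2=\Pr[\bar S=S\text{ and }\bar M\neq M]$, bound the first by the error of $\Aomn$ and the second by the error of $\Asc$, and handle rate and complexity by the arithmetic you give (in particular $n_S=(n/\Cp)^{1/3}$ makes the $\Aomn$ cost ${\cal O}((n_S\Cp)^3)={\cal O}(\Block\Cp^2)$, matching $\Asc$).

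There is, however, a genuine slip in where you locate the causality argument. You place it in the \emph{first} reduction (to $\Aomn$) and call the second ``routine since $\Asc$ is analyzed against an adversary who is already omniscient.'' This is backwards. In the reduction to $\Aomn$, the constructed adversary $\mathrm{Calvin}'_{\mathrm{omn}}$ is fully omniscient: it generates $M$, runs the $\Asc$ encoder itself to obtain $X_M$ and $S$, sets the external $\Aomn$-instance's input to this very $S$, and then knows \emph{all} of $X_S$; it can therefore feed Calvin his entire view and replicate his suffix corruptions regardless of whether Calvin is causal. No quarantine is needed here. By contrast, in the reduction to $\Asc$ the constructed adversary $\mathrm{Calvin}'_{\mathrm{sc}}$ is omniscient \emph{except for the secret $S$}, and hence cannot produce $X_S$ (which is the $\Aomn$-encoding of $S$). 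This is precisely where the $\Delta$-column gap is essential: because Calvin is causal, his corruptions on the $X_M$ columns depend only on columns with index at most $t+\Delta$, all of which lie in $[X_M\ 0]$; thus $\mathrm{Calvin}'_{\mathrm{sc}}$ can compute Calvin's prefix corruptions without ever knowing $S$ or $X_S$, and forward them to the external $\Asc$ instance. Your sentence ``no subtlety about what $\mathrm{Calvin}'_{\mathrm{sc}}$ may see'' is exactly wrong---the subtlety is that it does \emph{not} see $S$, and causality is what lets the simulation go through despite this. If you swap the location of the causality argument from the first reduction to the second, the proof is correct and matches the paper's.

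One small further point: for multicast, $\Abfo$ uses a \emph{single} secret $S$ delivered to all receivers via $\Aomn$; there is no receiver-dependent $S$ here (that complication arises in $\Ass$, not $\Abfo$), so your parenthetical about receiver-dependent secrets is unnecessary.
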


\begin{proof}
Most of the properties of $\Abfo$ follow from the related
properties of $\Asc$ and $\Aomn$, as given by
Theorems~\ref{thm:sc}~and~\ref{thm:omn}. The restriction that
$\Cp>2\numz$ guarantees positive rate for $\Aomn$ (as the rate of
$\Aomn$ is $\Cp-2\numz$). Other than that, $\Abfo$ inherits its
rate from $\Asc$. There is some loss of rate in $\Abfo$ (compared
with $\Asc$) due to the communication related to the zero columns
and to $X_S$. Nevertheless, this loss is negligible as $\Block$
tends to infinity. The choice of $n_S$ (the block length of
$X_S$), guarantees that the code complexity due to both building
blocks ($\Asc$ and $\Aomn$) will equal ${\cal O}(\Block\Cp^2)$.

It remains to bound the error probability $\epsilon$ of $\Abfo$.
Obviously, $\epsilon\leq \epsilon_1 + \epsilon_2$, where
$\epsilon_1$ is the probability that $\bar{S}\neq S$ while
$\epsilon_2$ is the probability that $\bar{S}=S$ but $\bar{M}\neq
M$. It follows that $\epsilon_1$ is bounded by the error
probability of $\Aomn$ when applied to messages of  block length
$n_S$ that corresponds to $X_S$. In \cite{JaggiLKHKM07} this error
is shown to be vanishing as the block length tends to infinity
(note that when $n$ tends to infinity so does
$n_S=(n/\Cp)^{1/3}$). To bound $\epsilon_2$ notice that Calvin
does not get access to $X_S$ until he is done corrupting $X_M$.
Thus $\epsilon_2$ is bounded by the error probability of $\Asc$
when applied to messages of  block length that correspond to
$X_M$. As the block length of $X_M$ is proportional to $n$, we
conclude our assertion.
\end{proof}

\begin{remark} While we described $\Abfo$ for the case of a
single Bob, it also applies with no change to the setting of
multicast (simply because $\Abfo$ completely ignores the location
of Bob in the network, and there is nothing that distinguishes one
Bob from the other).
\end{remark}

\section{Computationally-Bounded Adversary Model}
\label{sec:cb-model}

In this section we consider a limitation of a different flavor on
the strength of the adversary Calvin. Namely, we assume that
Calvin is {\em computationally} bounded. Assuming so allows us to
employ powerful cryptographic tools. The two results in this
section correspond to cryptographic tools that are applicable in
two different settings: (1) Symmetric-key cryptography (discussed
in Section~\ref{sec:sk-model}), and (2) Public-key cryptography
(discussed in Section~\ref{sec:pk-model}). As common in the study
of cryptographic primitives, both our results are conditional ---
in the sense that they hold assuming that certain cryptographic
primitives exist (such as the assumption that factoring is hard).

Note that apart from the computational limitations on Calvin his
powers are intact. In particular, Calvin has full eavesdropping
capabilities and has full knowledge of the network topology as
well as the identity of Alice and Bob. Calvin knows the encoding
and decoding schemes of Alice and Bob, and the network code
implemented by the internal nodes (including the random linear
coefficients).

\subsection{Symmetric-key cryptography} \label{sec:sk-model}

Recall our scheme $\Ass$ in the random secret model. Assuming that
Alice and Bob share a short random secret, this scheme allows them
to communicate a significant amount of information which is
specified by the block length $n$.  Unfortunately, Bob will only
be able to decode Alice's message after receiving the entire block
of communication. Therefore, it is natural to assume that every
time slot (e.g., every hour or every day, depending on the rate of
communication), Alice and Bob would like to terminate the previous
execution of $\Ass$ and to start a new execution. Let
$S_1,S_2,\ldots S_{\ell}$ be the sequence of secrets used in these
executions. In general, the secrets should be independent of each
other, which implies that Alice and Bob may need to share a long
secret if they communicate over a long period of time.

In the Computationally-Bounded Adversary model, Alice and Bob can
execute $\Ass$ many times while still only exchanging a short
random string $s$. For that purpose, Alice and Bob may use a
\defword{pseudorandom generator}. For a definition and thorough
discussion of pseudorandom generators see \cite{GoldreichI}.
Essentially, an efficiently computable function $G$ is a
pseudorandom generator, if (1) $G$ is length increasing (i.e., for
every input its output is longer than its input) and (2) $G(x)$ is
computationally indistinguishable from a uniform string, as long
as $x$ is uniformly distributed. In other words $G(x)$ is
effectively random. It is known that the existence of pseudorandom
generators is essentially the minimal cryptographic assumption (as
it is equivalent to the assumption that one-way functions exist).
A pseudorandom generator that expands the length of its input
implies a pseudorandom generator with arbitrary polynomial
expansion (again, the reader is referred to \cite{GoldreichI} for
a detailed discussion and references therein).

The relation to our context is now simple: Alice and Bob can
exchange a {\em single} short secret key $s$ prior to the
communication process. Applying a pseudorandom generator to this
single key, they can obtain many pseudorandom keys
$G(s)=S_1,S_2,\ldots,S_{\ell}$ to be used in repeated executions
of $\Ass$ (in fact, $\ell$ need not be known in advance as it is
possible to keep on expanding $G$'s output on the fly). The proof
that such repeated executions of $\Ass$ are still secure is rather
immediate from the definition of a pseudorandom generator and we
therefore only sketch it here. For any $i$, if $S_i$ is truly
random Calvin will fail the execution of $\Ass$ with very small
probability. Assume for the sake of contradiction that this is not
the case when $S_i$ is taken from the output of $G$. This gives a
way to distinguish the output of $G$ from random. The
distinguisher simply simulates the repeated execution of $\Ass$
(playing the roles of Alice, Bob, internal nodes, {\em and
Calvin}). Now if Calvin succeeds then the distinguisher can deduce
with non-negligible probability that $S_i$ is not truly random.

\begin{remark} As discussed in Section~\ref{sec:ss-model}, the
scheme $\Ass$ can be extended to the case of multicast. The idea
described here, of replacing the random keys in multiple
executions of $\Ass$ with pseudorandom keys, applies in the
setting of multicast as well (in that setting, Alice and each one
of the Bobs will share a short key that will be expanded to many
pseudorandom keys using the pseudorandom generator).
\end{remark}

\subsection{Public-key cryptography} \label{sec:pk-model}

A disadvantage of the scheme $\Ass$ is that Alice and Bob need to
share a common key. In this section we relax this set up
requirement and only ask that Bob holds a pair of keys: a private
key (known only to itself) and a public key (known to all
--- including Calvin). In such a setup, {\em without Alice and Bob ever
meeting or exchanging private information}, we are able to give a
network-coding scheme, $\Apk$, against a computationally-bounded
Calvin with very similar parameters to those of $\Abfo$ (which was
given in the Causal-Omniscient model).

\subsection{The scheme $\Apk$}

We present a network coding scheme, $\Apk$, for the public-key
model, that is very similar to our scheme for the Causal-Omniscient model. Again, we compose two schemes: A scheme
$\Asc$ in the secret-channel model, and a scheme $\Aomn$ in the
omniscient-adversary model. (See more details on the schemes in
Section~\ref{sec:prelim}.) Alice, Bob (and the network)
execute $\Asc$ with Alice's input $M$, but {\em without Alice
sending the secret message $S$ on the secret channel} (simply
because a secret channel is not available in this model). We would
like to execute $\Aomn$ with $S$ as Alice's new input.
Unfortunately, as in this model all of the information sent by
Alice (i.e., the matrix $X$ in its entirety) is known to Calvin
from the start, $S$ will be available to Calvin during the
execution of $\Asc$ and the scheme may fail. The solution is
simple: instead of sending $S$, Alice will first encrypt $S$ {\em
using Bob's public key}, and send the encryption to Bob using
$\Aomn$. We now describe our scheme and proof in more detail.

\paragraph{Public-key encryption}
A central ingredient in building $\Apk$ is a public-key encryption
(PKE) scheme. For a thorough discussion of public-key encryptions
see \cite{GoldreichII}. We describe here the relevant
definitions for completeness.

A \defword{PKE scheme} gives a way for two parties to communicate
securely even though they did not previously meet and exchange
secrets. The scheme is composed of three probabilistic polynomial
time algorithms -- the key generating algorithm $Gen$ and the
encryption and decryption algorithms $Enc$ and $Dec$: (1) The
input of $Gen$ is the security parameter $k$ (we will shortly
discuss the role of $k$), and its output is a pair of keys -- the
secret key $sk$ and the public key $pk$ (both are of length
polynomial in $k$). (2) The public key $pk$ (which is known to
everyone) is used for encryption. The encryption of a message $m$
is a ciphertext $y=Enc(pk,m)$. The plaintext $m$ may be of
arbitrary length and the length of $y$ is polynomial is the length
of $m$ and in $k$. (3) The secret key $sk$ allows decryption. For
every $y$ as above we have that $Dec(sk,y)=m$.

The security requirement from a PKE scheme is that a ciphertext
$y$ gives no information on the plaintext $m$ to a computationally
bounded adversary. More formally, we will use a PKE scheme which
is semantically-secure against chosen-plaintext attack (CPA)
\cite{GoldwasserMic84}. \footnote{In Remark~\ref{rem:CCA} we note
that in some cases one may choose to require security against
chosen-ciphertext attack (CCA security).} There are various
equivalent formalizations of this security requirement and the one
that seems most convenient for us is based on the notion of
indistinguishability. Loosely, this means that no efficient
adversary $Adv$ can distinguish an encryption of a message $m_0$
from an encryption of a message $m_1$, (where $Adv$ is also
allowed to select $m_0$ and $m_1$). In other words, given an
encryption $y=Enc(pk,m_{\sigma})$, where $\sigma$ is a uniformly
selected bit, an adversary cannot guess $\sigma$ with probability
significantly better than half. This is exactly where the security
parameter $k$ comes into play: the advantage over half of the
adversary in guessing $\sigma$ is smaller than $1/poly(k)$ for
every polynomial $poly$ (under stronger assumptions we may require
the advantage to be exponentially small in $k$).

 We are now ready to define $\Apk$ formally.

\paragraph{Security parameter} In this model, the network coding
scheme is defined per security parameter $k$. This parameter
should be chosen as to make the encryption scheme $\langle
Gen,Enc,Dec\rangle$ secure enough. As the errors we seek are of
the order of $1/\Block$, it is enough to take $k<\Block^{\alpha}$
for some small constant $\alpha>0$ (under stronger assumptions, $k$ may
even be logarithmic in $\Block$). This will imply that all the
ciphertexts used in our scheme are of negligible length compared
with $\Block$ (e.g.\ smaller than $\Block^{\alpha'}$ for any
$\alpha'>0$ of our choosing).

\paragraph{Bob's keys} Bob runs $Gen$ and gets as output the pair
$(sk,pk)$. Bob publishes $pk$ as his public key and saves his
secret key $sk$.

\paragraph{Alice's encoder}
Alice invokes the encoding and secret generating algorithms of
$\Asc$ on her input $M$. Denote by $X_M$ the output of the
encoding and $S$ the message to be sent on the secret channel.
Alice invokes $Enc(pk,S)$ and gets as output $Y$. Now Alice
invokes (using fresh randomness) the encoding algorithm of $\Aomn$
on $Y$ as input. Denote by $X_S$ the output of the encoding. Alice
 now outputs $X=[X_M\ X_S]$.
 As in Section~\ref{sec:bfo-model} we will pad $X_S$ if necessary
 to ensure that it has block length $n_S=(n/\Cp)^{1/3}$.

\paragraph{Network coding} As in $\Asc$ and $\Aomn$, the network
coding is the standard random-linear coding.

\paragraph{Bob's decoding} Bob first uses the decoder of $\Aomn$ on
the suffix of the communication (which corresponds to the columns
of $X_S$). Denote by $\bar{Y}$ the decoded message. Bob invokes
$Dec(sk,\bar{Y})$, and receives $\bar{S}$ as output. Bob now
applies the decoder of $\Asc$ on the prefix of the communication
(which corresponds to the columns of $X_M$), with the secret
message set to $\bar{S}$. Bob outputs the decoded message, which
we denote by $\bar{M}$.

\subsection{Properties of $\Apk$}

We state the parameters obtained by $\Apk$ in the following
theorem.

\begin{theorem}
$\Apk$ is a distributed polynomial-time scheme. $\Apk$ achieves a
rate of $\Cp - \numz$, as long as $\Cp>2\numz$, in the public-key
model, with code-complexity ${\cal O}(\Block\Cp^2)$.
\label{thm:pk}
\end{theorem}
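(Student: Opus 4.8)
The plan is to mirror the structure of the proof of Theorem~\ref{thm:bfo}, since $\Apk$ is essentially $\Abfo$ with the plaintext secret $S$ replaced by its public-key encryption $Y=Enc(pk,S)$. First I would dispatch the ``easy'' parameters. The claim that $\Apk$ is distributed and polynomial-time follows because it is a modular composition of the distributed polynomial-time schemes $\Asc$ and $\Aomn$ (Theorems~\ref{thm:sc} and~\ref{thm:omn}) together with the probabilistic polynomial-time algorithms $Gen,Enc,Dec$. The rate $\Cp-\numz$ is inherited from $\Asc$: the condition $\Cp>2\numz$ is exactly what guarantees positive rate for the inner block $\Aomn$ (whose rate is $\Cp-2\numz$), while the overhead from $X_S$ is negligible as $\Block\to\infty$ since $n_S=(n/\Cp)^{1/3}=o(n)$ and, by the choice of security parameter $k<\Block^\alpha$, the ciphertext $Y$ (and hence the padded $X_S$) is of length negligible compared with $\Block$. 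The same choice of $n_S$ makes the code-complexity of both building blocks $\mathcal{O}(\Block\Cp^2)$, exactly as in $\Abfo$.

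The substantive part is bounding the error probability $\epsilon$ of $\Apk$. As in Theorem~\ref{thm:bfo} I would write $\epsilon\le\epsilon_1+\epsilon_2$, where $\epsilon_1=\Pr[\bar S\neq S]$ and $\epsilon_2=\Pr[\bar S=S\text{ but }\bar M\neq M]$. For $\epsilon_1$: decoding $\bar S$ correctly amounts to $\bar Y=Y$ (then $Dec(sk,\bar Y)=Dec(sk,Y)=S$), and $\bar Y$ is the output of $\Aomn$'s decoder applied to the block $X_S$ of block length $n_S$; by Theorem~\ref{thm:omn} this error vanishes as $n_S\to\infty$, which happens as $\Block\to\infty$. (Here, unlike in the symmetric-key section, Calvin is not assumed causal, so we cannot argue that $X_S$ is hidden during the attack on $X_M$; but for $\epsilon_1$ we do not need to — $\Aomn$ is secure against a fully omniscient adversary.) For $\epsilon_2$ I would invoke the semantic security of the PKE scheme: conditioned on $\bar S=S$, Bob runs exactly the $\Asc$ decoder on the $X_M$-block with the correct secret $S$, so the only way $\epsilon_2$ can exceed the error probability of $\Asc$ is if Calvin, seeing the whole transmission $X=[X_M\ X_S]$, can exploit $X_S$ to learn something about $S$ and thereby break $\Asc$. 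But $X_S$ is an $\Aomn$-encoding of $Y=Enc(pk,S)$, and a reduction shows that an adversary gaining advantage $\mu$ this way yields a CPA-distinguisher with advantage polynomially related to $\mu$: the distinguisher, given a challenge ciphertext, simulates the network, $\Asc$, $\Aomn$, and Calvin, and outputs its guess based on whether Calvin succeeds. Since the secret $S$ (the hash/parity data of $\Asc$) has length at most $C^2+C$, independent of $\Block$, and $k$ can be taken as large as $\Block^\alpha$, this advantage is smaller than any $1/\poly(\Block)$, in particular $o(1/\Block)$. Hence $\epsilon_2$ exceeds the $\Asc$-error by a negligible amount, and the $\Asc$-error itself vanishes by Theorem~\ref{thm:sc} since $X_M$ has block length proportional to $\Block$. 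Combining, $\epsilon\to 0$.

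The main obstacle is making the reduction in the $\epsilon_2$ bound fully rigorous, since it is the only place where the computational assumption actually enters and the other building blocks ($\Asc$, $\Aomn$, the network code) are information-theoretic objects that must be folded into an efficient distinguisher. Two points need care: first, the distinguisher must run in probabilistic polynomial time, which is fine because $\Asc$, $\Aomn$ and the simulation of the (polynomial-time-sized) network are all polynomial-time, and Calvin is assumed computationally bounded; second, one must check that the quantity ``$\Pr[\bar S=S\text{ and }\bar M\neq M]$'' can indeed be phrased as a distinguishing advantage — this works by having the distinguisher feed $Enc(pk,S)$ versus $Enc(pk,S')$ for an independent dummy $S'$ into the simulated $\Aomn$-encoder, running the simulated $\Asc$ decoder with the genuine $S$, and testing whether $\bar M\neq M$; the gap between the two cases is exactly the extra advantage Calvin derives from $X_S$, and in the $S'$-case this extra advantage is $0$ because then $X_S$ is independent of $S$ and hence of the $\Asc$-execution. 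I would also remark, as the paper does for $\Abfo$, that the scheme ignores Bob's location so the analysis extends to multicast with negligible rate loss (with each Bob publishing his own key pair), and note the upper bound of \cite{JagLHE:05} showing $\Cp>2\numz$ is necessary.
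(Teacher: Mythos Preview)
Your proposal is correct and follows essentially the same approach as the paper: you decompose the error as $\epsilon\le\epsilon_1+\epsilon_2$, bound $\epsilon_1$ by the error of $\Aomn$ on block length $n_S$, and bound $\epsilon_2$ via a CPA-indistinguishability reduction in which Bob is given $S$ directly and the ciphertext in $X_S$ is replaced by an encryption of an unrelated message (the paper uses the all-zero message rather than a random $S'$, but this is immaterial). The only cosmetic discrepancy is that the paper splits on the event $\bar Y=Y$ rather than $\bar S=S$; your implicit use of $\Pr[\bar S\neq S]\le\Pr[\bar Y\neq Y]$ makes the bound on $\epsilon_1$ go through, and the mental-experiment step (giving Bob $S$) is exactly what absorbs the residual gap between the two events.
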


\begin{proof}
Similar to the proof of Theorem~\ref{thm:bfo}, most of
the properties of $\Apk$ follow from the related
properties of $\Asc$ and $\Aomn$, as given by
Theorems~\ref{thm:sc}~and~\ref{thm:omn}.

It remains to bound the error probability $\epsilon$ of $\Apk$.
Obviously, $\epsilon\leq \epsilon_1 + \epsilon_2$, where
$\epsilon_1$ is the probability that $\bar{Y}\neq Y$ while
$\epsilon_2$ is the probability that $\bar{Y}=Y$ but $\bar{M}\neq
M$. As in the proof of Theorem~\ref{thm:bfo}, it is not hard to
argue that $\epsilon_1$ is bounded by the error probability of
$\Aomn$ (when applied to messages of  block length $n_S$). We
would now like to argue that $\epsilon_2$ is bounded by the error
probability of $\Asc$ (when applied to messages of  block length
corresponding to $X_M$). This will turn out to be correct up to a
negligible additional error (which relates to the security
property of the PKE scheme).

To bound $\epsilon_2$ we must argue that Calvin cannot use the
encryption $Y$ of $S$ to increase his probability of corrupting
the communication between Alice and Bob. Intuitively, this is
clear - as the encryption $S$ is computationally sound. However,
to prove our argument formally, we need to present our claim under
the sole assumption that our encryption is secure against a
chosen-plaintext attack.  To this end, we condition on the fact
that $\bar{Y}=Y$ (and hence Bob knows $S$) and show that a
successful Calvin in our setting will imply a successful Calvin in
an imaginary setting in which Bob is given $S$ (e.g., via a side
channel) and the value of $X_S$ transmitted over the network is
the encoding of an all zero message. This in turn, implies that
Calvin can corrupt the original secret-channel protocol $\Asc$ of
\cite{JaggiLKHKM07}, a contradiction. We now sketch the details.

Let Calvin be an adversary that causes $\bar{Y}=Y$ and
$\bar{M}\neq M$ in an execution of $\Apk$, with probability
$\epsilon_2$. As a mental experiment assume that Bob's decoder
receives $S$ as an additional input. Bob can then ignore $\bar{Y}$
and simply invoke the decoder of $\Asc$. In this experiment we
have that the probability that Calvin manages to cause $\bar{Y}=Y$
but $\bar{M}\neq M$ is still $\epsilon_2$. This follows
immediately from the properties of a PKE scheme (as if $\bar{Y}=Y$
we also have that $\bar{S}= S$).

Further revising this mental experiment, let us now assume that
Alice defines $Y$ as the encryption of the zero-message (or any
other fixed message), rather than the encryption of $S$. It is not
hard to argue that in this case, the probability that Calvin
causes $\bar{M}\neq M$ is at least $\epsilon_2-neg(k)$ where
$neg(\cdot)$ is some negligible function (that is asymptotically
smaller than $1/poly(\cdot)$ for every polynomial $poly(\cdot)$).
If this is not the case then we can easily devise an adversary
$Adv$ that breaks the security of the PKE scheme. $Adv$ will
simulate all parties of the network-coding scheme (Alice, Bob,
Calvin and the internal nodes). When Alice generates $S$ then
$Adv$ will set $m_1=S$ and will set $m_0$ to be the all-zero
message. $Adv$ then receives $y$ which is an encryption of one of
these messages. $Adv$ can now continue the simulation of the
network-coding scheme with $Y=y$. Finally, when the simulation is
over, $Adv$ will output one if $\bar{M}\neq M$ and zero otherwise.

Summing up, we have an adversary Calvin that causes $\bar{M}\neq
M$ with probability $\epsilon_2'=\epsilon_2-neg(k)$, in the
revised mental experiment based on $\Apk$. Note that in this
mental experiment $X_S$ is completely independent of $S$.
Therefore, it is possible to define an adversary Calvin' that
fails $\Asc$ with probability $\epsilon'_2$ by simulating the
attack of Calvin in the setting of the mental experiment.
The theorem therefore follows.
\end{proof}

\begin{remark}\label{rem:CCA}
In the definition of $\Apk$ we used a PKE scheme which is secure
against a chosen-plaintext attack. We proved that one invocation
of $\Apk$ works when the public-key is only used for this
invocation. In case the same public-key is used many times, and
especially if it used for messages sent from different senders it
may be safer to use a PKE scheme that is secure against a
chosen-ciphertext attack (see \cite{GoldreichII} for more
information).
\end{remark}

\paragraph{The case of multicast} In the above description
of $\Apk$, we considered for simplicity the case of a single Bob.
The scheme can be extended to the setting of multicast, in a
similar manner to the extension of $\Ass$ to multicast (see
discussion in Section~\ref{sec:ss-model}). In fact the extension
is a bit simpler in the case of $\Ass$ as we describe now.

We assume that the $i$'th receiver knows a pair consisting of a
secret key $sk_i$ and public key $pk_i$. The public key is known
to everyone including Alice. Now, for every $i$, Alice will
transmit (using $\Aomn$) the pair $(pk_i,Y_i)$, where
$Y_i=Enc(pk_i,S)$ (recall that in the basic scheme, Alice
transmits a single $Y$). Based on the properties of $\Aomn$ we can
assume that each of the Bobs correctly retrieves all of the pairs.
The $i$th receiver can decode $Y_i$ (which corresponds to its
public key $pk_i$) and continue the decoding of $M$ as in the
basic scheme. To argue that with high probability each one of the
Bobs will decode Alice's message $M$ correctly, we note that the
concatenation of all the different encryptions of $S$ still does
not reveal any information on $S$.

\section{Conclusions}
\label{sec:further}

In this paper we have introduced three adversarial models and have
argued that (1) The models may be realistic. (2) The models are
useful in the sense that they allow non-trivial improvements in
the parameters of network coding schemes. We feel that this calls
for more attention into the assumptions regarding adversarial
limitations and set-up assumptions that apply in ``real life"
scenarios. Are the models suggested here indeed applicable? Are
there any other realistic and useful models to consider?

\bibliographystyle{plain}
\bibliography{ncoding}

\end{document}